\newtheorem{definition}{Definition}
\newtheorem{lemma}{Lemma}
\def\BibTeX{{\rm B\kern-.05em{\sc i\kern-.025em b}\kern-.08em
    T\kern-.1667em\lower.7ex\hbox{E}\kern-.125emX}}
\begin{document}

\title{ Task Freshness-aware Incentive Mechanism for Vehicle Twin Migration in Vehicular Metaverses
}
\author{\IEEEauthorblockN{Jinbo Wen\IEEEauthorrefmark{1}, Jiawen Kang\IEEEauthorrefmark{1}, Zehui Xiong\IEEEauthorrefmark{2}, Yang Zhang\IEEEauthorrefmark{3}, Hongyang Du\IEEEauthorrefmark{4}, Yutao Jiao\IEEEauthorrefmark{5}, Dusit Niyato\IEEEauthorrefmark{4}, \textit{Fellow, IEEE}}

\IEEEauthorblockA{\IEEEauthorrefmark{1}\textit{Guangdong University of Technology, China}
\IEEEauthorrefmark{2}\textit{Singapore University of Technology and Design, Singapore}\\
\IEEEauthorrefmark{3}\textit{Nanjing University of Aeronautics and Astronautics, China}
\IEEEauthorrefmark{4}\textit{Nanyang Technological University, Singapore}\\
\IEEEauthorrefmark{5}\textit{Army Engineering University of PLA, China}
}

\IEEEcompsocitemizethanks{The work was supported by NSFC under grant No. 62102099, No. U22A2054, No. 62071343, and No. 62101594, and the Pearl River Talent Recruitment Program under grant 2021QN02S643, and supported in part by the National Research Foundation (NRF), Singapore, and Infocom Media Development Authority under the Future Communications Research Development Programme (FCP), and also supported by the Ministry of Education, Singapore, under its SUTD Kickstarter Initiative (SKI 20210204). (Corresponding author: Jiawen Kang (e-mail: kavinkang@gdut.edu.cn)).}
}



\maketitle

\begin{abstract}
Vehicular metaverse, which is treated as the future continuum between automotive industry and metaverse, is envisioned as a blended immersive domain as the digital twins of intelligent transportation systems. Vehicles access the vehicular metaverses by their own Vehicle Twins (VTs) (e.g., avatars) that resource-limited vehicles offload the tasks of building VTs to their nearby RoadSide Units (RSUs). However, due to the limited coverage of RSUs and the mobility of vehicles, VTs have to be migrated from one RSU to other RSUs to ensure uninterrupted metaverse services for users within vehicles. This process requires the next RSUs to contribute sufficient bandwidth resources for VT migrations under asymmetric information. To this end, in this paper, we design an efficient incentive mechanism framework for VT migrations. We first propose a novel metric named Age of Migration Task (AoMT) to quantify the task freshness of the VT migration. AoMT measures the time elapsed from the first collected sensing data of the freshest avatar migration task to the last successfully processed data at the next RSU. To incentivize the contribution of bandwidth resources among the next RSUs, we propose an AoMT-based contract model, where the optimal contract is derived to maximize the expected utility of the RSU that provides metaverse services. Numerical results demonstrate the efficiency of the proposed incentive mechanism for VT migrations.

\end{abstract}

\begin{IEEEkeywords}
Metaverse, vehicle twin, contract theory, age of information, migration.
\end{IEEEkeywords}

\section{Introduction}
With the gradual maturation of metaverse technologies, implementing metaverse-like immersive experiences within
vehicles appears to be a potential future direction for vehicular interactions \cite{zhou2022vetaverse}. Vehicular metaverse is expected to lead an evolution of the automotive industry\cite{xu2022epvisa}, which integrates extended reality technologies and real-time vehicular data seamlessly to blend physical and virtual spaces for drivers and passengers within vehicles\cite{jiang2022reliable}. In \cite{yu2022bi}, smart driving of the digital twin in the metaverse was introduced. As the digital component of vehicular metaverses, Vehicle Twins (VTs) are large-scale and highly accurate digital replicas that cover the life cycle of vehicles and manage vehicular applications \cite{yu2022bi}. With the help of intra-twin communications, which refer to interactions between VTs and vehicles\cite{10090432}, vehicles can access vehicular metaverses through VTs, for example, in an avatar manner. The VTs can be updated in virtual spaces continuously by sensing data from surrounding environments\cite{zhang2022toward}, including bio-data of passengers, real-time vehicle status, and traffic data in the physical space\cite{xu2022epvisa}, which is advantageous in the development of vehicular metaverses that can interact and coexist with the physical space, functioning as autonomous and durable virtual spaces\cite{yu2022bi}.

Due to the resource limitation of vehicles, it is impractical for vehicles to build high-fidelity virtual models, which may lead to intensive computation for resource-limited vehicles \cite{jiang2022reliable}. Under such conditions, vehicles offload the large-scale rendering tasks of building VTs to the nearby edge servers in RoadSide Units (RSUs) for ultra-reliable and lower-latency metaverse services. Here the RSU providing metaverse services is called Metaverse Service Provider (MSP). Owing to the limited coverage of RSUs, VTs with a mobile nature have to be migrated from the current RSU (i.e., the MSP) to others for continuous metaverse services. Hence, the task freshness of the VT migration (i.e., the time elapsed of completing the current VT migration task) is essential to the provision of continuous metaverse services. To ensure the task freshness of the VT migration, VT migrations require enough available resources, especially bandwidth resources, thus the destination RSUs are required to provide bandwidth resources for VT migrations, where the destination RSUs are called Metaverse Resource Providers (MRPs). Because of information asymmetry, MRPs' private information (e.g., channel conditions and bandwidth costs) might be not aware to the MSP\cite{bolton2004contract}. As a result, a malicious MRP may not contribute bandwidth resources honestly to obtain more benefits without a reasonable incentive mechanism\cite{hou2017incentive}, which affects the task freshness of the VT migration. 

Some efforts have been conducted for optimizing resource allocation and efficiently processing computing-intensive tasks of real-time rendering in vehicular metaverses\cite{jiang2022reliable,xu2022wireless,9827604,xu2023generative}. For example, the authors  in \cite{jiang2022reliable} proposed a hierarchical game-theoretic approach to
investigate the sustainable and reliable coded distributed computing scheme, which supports immersive user experiences in vehicular metaverses. In \cite{xu2022wireless}, the authors formulated a learning-based incentive mechanism to evaluate and enhance VR experiences in the metaverse. In \cite{9827604}, the authors proposed a quantum collective learning and many-to-many matching game-based scheme in the metaverse for connected and autonomous vehicles. However, the above works ignore the VT migration problem due to the mobility of vehicles. Thus, it is still challenging how to optimize resource allocation for VT migrations in vehicular metaverses \cite{yu2022bi}.

To address the above challenges, in this paper, since the existing metrics like Age of Task \cite{song2019age} cannot measure the VT migration delay, we first propose a novel metric named Age of Migration Task (AoMT) based on the concept of Age of Information (AoI). To improve VT migration efficiency, we formulate an AoMT-based incentive mechanism with asymmetric information. The main contributions of this paper are summarized as follows:
\begin{itemize}
    \item To measure precisely the task freshness of the VT migration, we propose a novel metric named AoMT for vehicular metaverses, which can be applied to evaluate the satisfaction of the MSP.
    \item To incentivize the contribution of bandwidth resources among MRPs, we propose an AoMT-based contract model. \textit{To the best of our knowledge, this is the first work studying the incentive mechanism for VT migrations in vehicular metaverses.}
    \item We design the optimal contract which is feasible and maximize the utility of the MSP under information asymmetry. Numerical results demonstrate that the proposed incentive mechanism is practical and efficient.
\end{itemize}

\begin{figure*}[t]\centering     \includegraphics[width=0.9\textwidth]{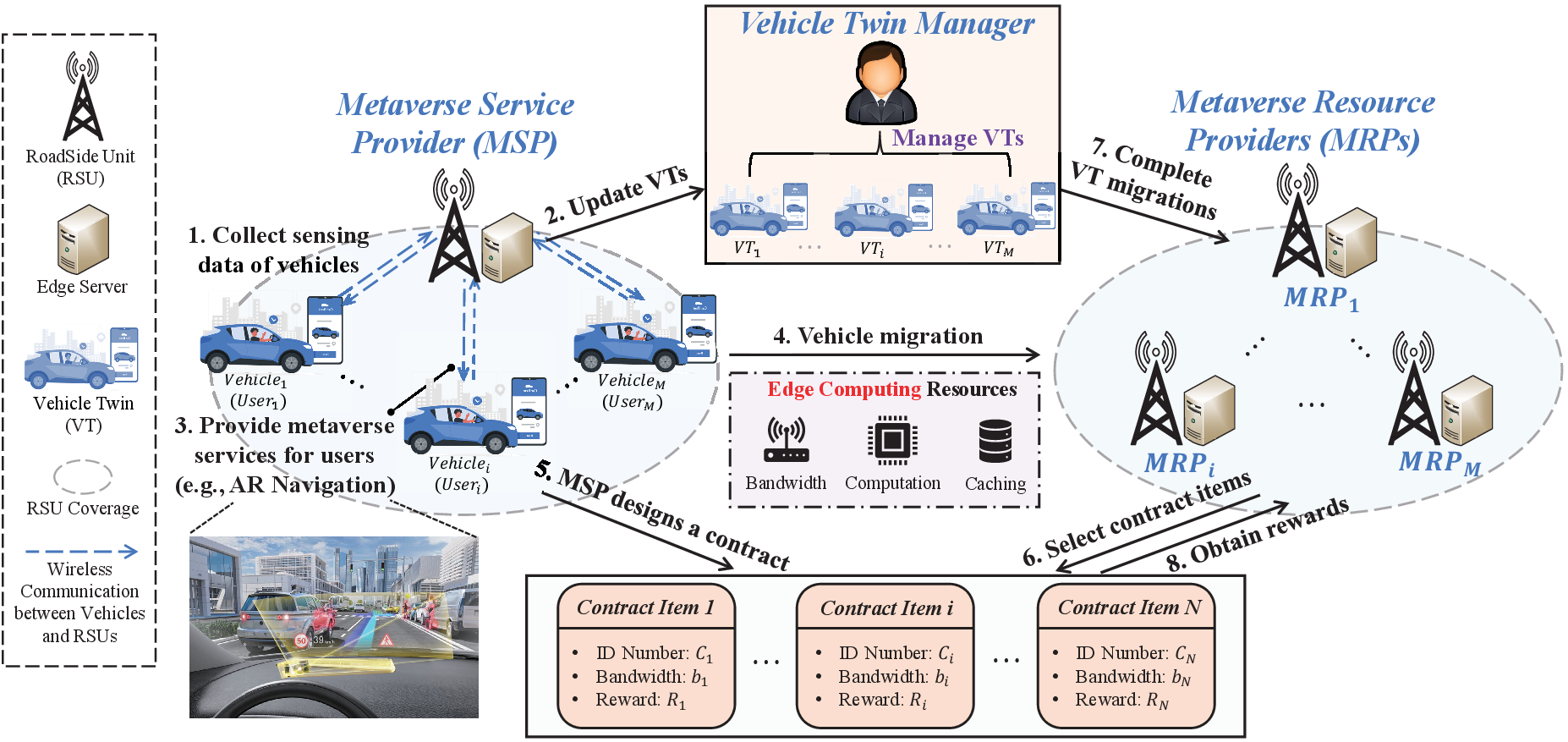}     \caption{An AoMT-based incentive mechanism framework for VT migrations.}  \label{system}     
\end{figure*}

\section{AoMT-based Incentive Mechanism Framework for Vehicle Twin Migration}
As shown in Fig. \ref{system}, edge-assisted remote rendering is an important technology applied in the metaverse\cite{huang2022joint}. To build VTs (e.g., avatars) for accessing metaverse services like Augmented Reality (AR) navigation, occupants (i.e., users) send service requirements to the nearby RSU (i.e., the MSP) that can provide necessary resources (i.e., storage, caching, and computing) for the VT construction\cite{yu2022bi}. For the convenience of explanation, we take vehicle avatars as an example of VTs. Then, the MSP offloads computation-intensive rendering tasks to its proximal edge server and builds avatars to provide lower-latency and ultra-reliable metaverse services for users\cite{huang2022joint}. To efficiently manage avatars on RSUs, Vehicle Twin Managers of the RSU are introduced. However, when the users travel on the road, the current MSP cannot provide continuous services for users outside its coverage. Thus, the avatars have to be migrated to other RSUs. In addition, to ensure immersive experiences for users, the MSP requires sufficient bandwidth resources to enhance avatar migration efficiency and meets the delay requirement of metaverse services during migration\cite{huang2022joint}. We provide more details of the framework as follows:
\begin{itemize}
    \item \textbf{MSP:} The MSP collects sensing data of users and builds avatars to provide ultra-reliable and real-time metaverse services for users. To ensure high-quality metaverse services, the MSP focuses on the task freshness of the avatar migration and requires bandwidth resources from the destination RSUs (i.e., MRPs). After completing avatar migrations, the MSP pays for the MRPs according to their contributions.
    \item \textbf{MRPs:} Each MRP contributes bandwidth resources for the MSP to achieve the avatar migration. The required amount of bandwidth depends on the service level agreements. All MRPs with private information (e.g., channel conditions and bandwidth costs) are selfish and have the potential to obtain more benefits because of information asymmetry. Note that each MRP becomes the new MSP after completing the current avatar migration.
    \item \textbf{Users:} Occupants request and obtain metaverse services from the MSP, such as AR navigation and VR vehicular videos. After completing the avatar migration, each vehicle establishes a connection with the MRP where its avatar is hosted to provide metaverse services for users, and the MRP becomes the new MSP. Note that we consider that each vehicle has a corresponding avatar to manage vehicular applications during migration.
    \item \textbf{Vehicle Twin Manager:} The main responsibility of the Vehicle Twin Manager is to manage avatars on its RSU (i.e., the MSP), including updating avatars. For instance, when avatars experience technical issues, such as being unable to maintain stability, the Vehicle Twin Manager immediately informs the MSP to reconstruct avatars, which ensures the high quality of immersive experiences for users.
\end{itemize}

\section{Problem Formulation}
To incentivize MRPs for the contribution of bandwidth resources, we first propose a novel metric named AoMT to quantify the task freshness of avatar migration, which can be applied to evaluate the satisfaction of the MSP. Second, we formulate the utility functions of both MRPs and the MSP (i.e., the avatar migration task publisher). We consider that there are one MSP and a set $\mathcal{M}$ of $M$ MRPs in avatar migrations, where $\mathcal{M} = \small\{1,\ldots,m,\ldots,M\small\}$. The MSP, which publishes $M$ avatar migration tasks, motivates $M$ MRPs to contribute bandwidth resources in avatar migrations.

\subsection{Age of Migration Task for Avatar Migrations}

AoI has been commonly used as an effective metric to quantify information freshness at the destination. It is defined as the time elapsed since the generation of the last successfully received message containing updated information about its source system, and its minimization depends on the status update frequency\cite{kosta2017age}. However, it does not consider the data processing procedure\cite{ChenYing}. Recent studies like Age of Task and Age of Processing \cite{li2021age} improve the AoI by taking the data processing time into account, but they only consider the scenarios with single-type sensing data and cannot measure the avatar migration delay. Therefore, to quantify the task freshness of the avatar migration, we propose a new metric named AoMT based on the concept of AoI. Similar to \cite{ChenYing}, AoMT is defined as the time elapsed from the first collected sensing data of the newest avatar migration task to the last successfully processed data at the MRP.

The time of completing an avatar migration comprises three parts: 1) The time of collecting sensing data (e.g., traffic conditions and vehicle locations) by the MSP (denoted as $t_c$). 2) The time of sending the avatar data from the MSP to the MRP (denoted as $t_s$). 3) The time of processing received data by the MRP (denoted as $t_p$). For simplicity, MRPs have the same ability to communicate with users and process data\cite{huang2022joint}. Therefore, we consider that $t_c$ and $t_p$ are the same for all avatar migrations, respectively. We set $t_c + t_p = T \in \mathbb{R}^+$ as a constant.

It is considered that the Orthogonal Frequency Division Multiplexing Access (OFDMA) technology is applied in the system, which ensures that all communication channels occupied by different MRPs and the MSP are orthogonal\cite{huang2022joint, zhang2019joint}. For MRP $m\in \mathcal{M}$, given the bandwidth $b_m$ allocated to the MSP, the achievable information transmission rate between the MSP and the MRP $m$ is
\begin{equation}
    \begin{aligned}
        \gamma_m = b_m \log_2\bigg(1+\frac{\rho_s h_m^0 d_{s, m}^{-\alpha}}{N_0b_m}\bigg),
    \end{aligned}
\end{equation}
where $\rho_s$, $h_m^0$, $d_{s,m}$, $\alpha$, and $N_0$ represent the transmit power of the MSP, the unit channel power gain, the distance between the MSP and the MRP $m$, the path-loss coefficient, and the noise power density, respectively\cite{9930881,zhang2019joint}. We define the channel power gain between the MSP and the MRP $m$ as $G_{s,m} = h_m^0 d_{s,m}^{-\alpha}$. Therefore, for the MRP $m$, the AoMT of the avatar migration is 
\begin{equation}
    \begin{aligned}
        A_m (b_m)  = \frac{D_m}{\gamma_m} + T,
    \end{aligned}
\end{equation}
where $D_m$ is defined as the avatar data transmitted to the MRP $m$, including the information of the system configuration, historical running data, and real-time avatar states\cite{9491087}. Note that $A_m(b_m)$ is not a convex function with respect to $b_m$.
\subsection{MRP Utility}
The utility of MRP $m$ is the difference between the received monetary reward $R_m$ and its cost $C_m$ of participating in the avatar migration, which is presented as
\begin{equation}
    \begin{aligned}
        U_m = R_m - C_m.
    \end{aligned}
\end{equation}
Since the cost of bandwidth is from the energy consumption of the transmitted information\footnote{Note that the transmit power is the average power of the transmit signal, and the bandwidth reflects the spectrum of significant frequency components allocated for the transmission of the input signal.}, referring to \cite{hou2017incentive, mohsenian2010autonomous}, $C_m$ is defined as 
\begin{equation}
    \begin{aligned}
        C_m = \mathcal{C}_m(b_m/G_{s,m}),
    \end{aligned}
\end{equation}
where $\mathcal{C}_m(\cdot)$ is used to model the bandwidth cost of MRP $m$, given by
\begin{equation}
    \begin{aligned}
        \mathcal{C}_m(x) = a_m x^2,
    \end{aligned}
\end{equation}
where $a_m > 0$ is the bandwidth cost coefficient. Thus, the utility of MRP $m$ becomes
\begin{equation}
    \begin{aligned}
        U_m = R_m - \frac{a_m}{G_{s,m}^2}b_m^2.
    \end{aligned}
\end{equation}

Due to information asymmetry, the MSP is not aware of each MRP's exact bandwidth cost coefficient and channel gain, but it can sort the MRPs into discrete types and use the statistical distributions of the MRPs' types from historical data to optimize the expected utility of the MSP\cite{kang2022blockchain}. Specifically, we divide the MRPs into different types and define the $n$-th type MRP as
\begin{equation}\label{type}
    \begin{aligned}
        \theta_n \triangleq \frac{G_{s,n}^2}{a_n}.
    \end{aligned}
\end{equation}
Since $a_n > 0$ and $G_{s,n} > 0$, we have $\theta_n > 0$. (\ref{type}) indicates that the larger the channel gain $G_{s,n}$ between the MSP and the $n$-th type MRP, or the lower the unit bandwidth cost coefficient $a_n$, the higher the type of the MRP.

Without loss of generality, the MRPs can be classified into a set $\mathcal{N} = \left\{\theta_n: 1 \leq n \leq N\right\}$ of $N$ types. In an ascending order, the MRPs' types are sorted as $\theta_1 \leq \theta_2 \leq \cdots \leq \theta_N$. In this definition, the higher type MRP has a better channel quality or a lower bandwidth cost coefficient. To facilitate explanation, the MRP with type $n$ is called the type-$n$ MRP. Therefore, based on (\ref{type}), the utility of the type-$n$ MRP is rewritten as
\begin{equation}
    \begin{aligned}
        U_n^C(b_n,R_n) = R_n - \frac{b_n^2}{\theta_n}.
    \end{aligned}
\end{equation}

\subsection{MSP Utility}
Since the large AoMT not only leads to a poor immersive experience for users but also degrades the MSP's satisfaction with the avatar migration, the MSP's satisfaction function obtained from the type-$n$ MRP is defined as\cite{kang2022blockchain}
\begin{equation}
    \begin{aligned}
        S_n = \beta \ln(g(b_n)+1),
    \end{aligned}
\end{equation}
where $\beta > 0$ is the unit profit for the satisfaction of the MSP and $g(\cdot)$ is the performance obtained from the type-$n$ MRP, which is defined as
\begin{equation}
    \begin{aligned}
        g(b_n) = K - A_n,
    \end{aligned}
\end{equation}
where $K$ is the maximum tolerant AoMT. In this paper, we consider that $K$ is not less than $A_n$.

Because of information asymmetry, the MSP only knows the number of MRPs and the distribution of each type but does not know each MRP's private type, namely the exact number of MRPs belonging to each type\cite{hou2017incentive}. Thus, considering that the probability of an MRP belonging to a certain type-$n$ is $Q_n$, subject to $\sum_{n\in \mathcal{N}}Q_n = 1$, the utility of the MSP is
\begin{equation}\label{U_s}
    \begin{aligned}
        U_s(\bm{b},\bm{R}) = \sum_{n\in \mathcal{N}}M Q_n (S_n - R_n),
    \end{aligned}
\end{equation}
where $\bm{b}=[b_n]_{1\times N}$ and $\bm{R}=[R_n]_{1\times N}$ denote the bandwidth and reward vectors for all $N$ types of MRPs, respectively.

\section{Optimal Contract Design}
In this section, we formulate the optimal contract, characterize its feasibility conditions, and provide an optimal solution for the formulated contract.

Since the types of MRPs are private information that is not visible to the MSP, a rational MRP may provide false information maliciously and pretend to be an MRP with a better channel condition and/or a smaller bandwidth cost to cheat for more rewards\cite{hou2017incentive}. To improve the performance of avatar migrations under asymmetric information, the MSP uses contract theory to effectively motivate the MRPs to contribute bandwidth resources.

\subsection{Contract Formulation}
A contract consists of a group of bandwidth-reward pairs (i.e., contract items) provided to the MRPs, which are designed by the MSP to maximize the expectation of the MSP's utility. Each MRP selects the best contract item based on its type to maximize its benefit. The contract item can be denoted as $\Phi = \left\{ (b_n, R_n), n\in \mathcal{N}\right\}$, where $b_n$ is the bandwidth provided by the type-$n$ MRP and $R_n$ is the reward paid to the type-$n$ MRP as the incentive for the corresponding contribution.

To ensure that each MRP optimally chooses the contract item designed for its type, the following Individual Rationality (IR) and Incentive Compatibility (IC) constraints should be satisfied\cite{hou2017incentive}.
\begin{definition}
    (Individual Rationality) The contract item that an MRP should ensure a non-negative utility, i.e.,
    \begin{equation}\label{IR}
        \begin{aligned}
            U_n^C(b_n, R_n) = R_n - \frac{b_n^2}{\theta_n} \geq 0,\:\forall n \in \left\{1,\ldots,N\right\}.
        \end{aligned}
    \end{equation}
\end{definition}

\begin{definition}
    (Incentive Compatibility) An MRP of any type $n$ prefers to select the contract item $(b_n, R_n)$ designed for its type rather than any other contract item $(b_j, R_j), \forall j \in \left\{1,\ldots,N\right\}$, and $j\neq n$, i.e., 
    \begin{equation}\label{IC}
        \begin{aligned}
            R_n - \frac{b_n^2}{\theta_n} \geq R_j - \frac{b_j^2}{\theta_n}, \:\forall n, j \in \left\{1,\ldots, N\right\}, n\neq j.
        \end{aligned}
    \end{equation}
\end{definition}
The IR constraints ensure the participation of MRPs and the IC constraints ensure that each MRP chooses the contract item designed for its specific type to obtain the highest benefits. With the IR and IC constraints, the MSP aims to maximize its expected utility. Therefore, the problem of maximizing the expected utility of the MSP is formulated as
\begin{equation}\label{problem1}
    \begin{split}
        \textbf{Problem 1:}\: &\max\limits_{\bm{b},\bm{R}}\:U_s(\bm{b},\bm{R})\\
        &\:\:\text{s.t.}\:\: R_n - \frac{b_n^2}{\theta_n} \geq 0,\:\forall n \in \left\{1,\ldots,N\right\},\\
        &\qquad R_n - \frac{b_n^2}{\theta_n} \geq R_j - \frac{b_j^2}{\theta_n}, \:\forall n, j \in \left\{1,\ldots, N\right\},\\
        &\qquad b_n \geq 0, R_n \geq 0, \theta_n > 0,\: \forall n \in \left\{1,\ldots,N\right\},
    \end{split}
\end{equation}

\subsection{Optimal Contract Solution}
Since there are $N$ IR constraints and $N(N-1)$ IC constraints in \textbf{Problem 1}, it is difficult to directly solve \textbf{Problem 1}. Therefore, we reformulate \textbf{Problem 1} by the following necessary conditions.
\begin{lemma}
    With information asymmetry, a feasible contract must satisfy the following conditions:
    \begin{subequations}
        \begin{align}
            &\:R_1 - \frac{b_1^2}{\theta_1} \geq 0,\label{IR}\\
            &\:R_n - \frac{b_n^2}{\theta_n} \geq R_{n-1}-\frac{b_{n-1}^2}{\theta_n},\:\forall n\in\left\{2,\ldots,N\right\},\label{LDIC}\\
            &\: R_n -\frac{b_n^2}{\theta_n} \geq R_{n+1} - \frac{b_{n+1}}{\theta_n},\: \forall n \in \left\{1,\ldots,N-1\right\},\label{LUIC}\\ 
            &\: R_N \geq R_{N-1} \geq \cdots \geq R_1,\: b_N \geq b_{N-1}\geq \cdots \geq b_1.\label{IC}
        \end{align}
    \end{subequations}
\end{lemma}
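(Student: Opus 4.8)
The plan is to show that every contract that is feasible for \textbf{Problem 1} — one satisfying all $N$ IR constraints and all $N(N-1)$ IC constraints — is forced to obey the four listed conditions, so that these may legitimately replace the original constraint set as necessary conditions. Three of the four are immediate, because they are obtained merely by discarding constraints: the inequality $R_1 - b_1^2/\theta_1 \ge 0$ is the IR constraint of \textbf{Problem 1} specialized to $n=1$; condition~\eqref{LDIC} (local downward IC) is the IC constraint comparing the type-$n$ item with the type-$(n-1)$ item; and condition~\eqref{LUIC} (local upward IC) is the IC constraint comparing the type-$n$ item with the type-$(n+1)$ item. Hence the only statement requiring real work is the monotonicity of $\bm{b}$ and $\bm{R}$, and that is where I expect the only genuine difficulty.

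To obtain the monotonicity, I would fix $n\in\{2,\dots,N\}$ and play the two local IC inequalities between adjacent types $n$ and $n-1$ against each other. The downward constraint~\eqref{LDIC} at index $n$ gives $R_n-R_{n-1}\ge (b_n^2-b_{n-1}^2)/\theta_n$, while the upward constraint~\eqref{LUIC} at index $n-1$ gives $R_n-R_{n-1}\le (b_n^2-b_{n-1}^2)/\theta_{n-1}$. Chaining the two bounds yields
\[
\Bigl(\tfrac{1}{\theta_n}-\tfrac{1}{\theta_{n-1}}\Bigr)\bigl(b_n^2-b_{n-1}^2\bigr)\le 0 .
\]
Since the types are sorted in ascending order, $\theta_n\ge\theta_{n-1}>0$, so the first factor is nonpositive; therefore $b_n^2\ge b_{n-1}^2$, and because $b_n,b_{n-1}\ge 0$ this is exactly $b_n\ge b_{n-1}$. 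Substituting $b_n^2-b_{n-1}^2\ge 0$ back into the lower bound and using $\theta_n>0$ gives $R_n\ge R_{n-1}$. Letting $n$ run through $\{2,\dots,N\}$ then produces both chains $b_N\ge\cdots\ge b_1$ and $R_N\ge\cdots\ge R_1$.

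The step I would be most careful about — and the main obstacle in the argument — is the case of coincident types $\theta_n=\theta_{n-1}$: then the upper and lower bounds on $R_n-R_{n-1}$ collapse to the same value and the displayed inequality no longer forces $b_n^2\ge b_{n-1}^2$. I would resolve this in the standard way: MRPs of identical type are statistically indistinguishable to the MSP, so without loss of generality the contract may assign them a common item, which is equivalent to merging equal types and working with the strict order $\theta_1<\cdots<\theta_N$; the monotonicity chains then hold for the merged indices, and the original weakly ordered statement follows. Everything else reduces to elementary manipulation of the linear inequalities, so I do not anticipate further obstacles; the remaining effort in the paper is the converse direction (showing these conditions are also sufficient, e.g.\ that chained LDIC recovers all downward IC constraints and chained LUIC all upward ones) and the subsequent optimization, for which this lemma is the setup.
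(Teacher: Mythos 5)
Your proof is correct and is essentially the standard adjacent-IC chaining argument that the paper itself delegates to the cited reference: the type-$1$ IR and the two local IC families are literal restrictions of the original constraint set, and pairing the downward constraint at $n$ with the upward constraint at $n-1$ gives $\bigl(\tfrac{1}{\theta_n}-\tfrac{1}{\theta_{n-1}}\bigr)(b_n^2-b_{n-1}^2)\le 0$ and hence both monotonicity chains. Your explicit handling of coincident types $\theta_n=\theta_{n-1}$ is a legitimate extra precaution that the paper's weak ordering $\theta_1\le\cdots\le\theta_N$ quietly requires, and you have also implicitly (and correctly) repaired the paper's typo $b_{n+1}/\theta_n$ in the local upward IC condition, which should read $b_{n+1}^2/\theta_n$.
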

   
\begin{proof}
    Please refer to \cite{hou2017incentive}.
\end{proof}
Constraint (\ref{IR}) is related to the IR constraints. Constraints (\ref{LDIC}), (\ref{LUIC}), and (\ref{IC}) are related to the IC constraints. Constraints (\ref{LDIC}) and (\ref{LUIC}) show that the IC constraints can be transformed into the Local Downward Incentive Compatibility (LDIC) and the Local Upward Incentive Compatibility (LUIC) with monotonicity, respectively\cite{hou2017incentive}.

Based on \textbf{Lemma 1}, the optimal rewards for any allocated bandwidth can be obtained by the following \textbf{Lemma 2}.
\begin{lemma}
    For a feasible set of bandwidth $\bm{b}$ satisfying $b_1 \leq \cdots \leq b_n \leq \cdots \leq b_N$, we can obtain the optimal reward as
    \begin{equation}\label{R_n}
         R_n^{\star} = \left\{
        \begin{split}
            & \frac{b_1^2}{\theta_1},\: n = 1,\\
            & R_{n-1} + \frac{b_n^2}{\theta_n}-\frac{b_{n-1}^2}{\theta_n},\:n = 2,\ldots,N.
        \end{split}
        \right.
    \end{equation}
\end{lemma}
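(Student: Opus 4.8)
The plan is to exploit the fact that, for a fixed bandwidth profile $\bm{b}$, the MSP's objective $U_s(\bm{b},\bm{R}) = \sum_{n\in\mathcal{N}} M Q_n (S_n - R_n)$ is strictly decreasing in every $R_n$, since $S_n$ depends on $b_n$ only. Hence the reward-optimization subproblem reduces to choosing, among all $\bm{R}$ that keep $(\bm{b},\bm{R})$ feasible, the componentwise-smallest one. By \textbf{Lemma 1}, any feasible contract must satisfy the type-$1$ IR constraint (\ref{IR}) together with the LDIC (\ref{LDIC}), LUIC (\ref{LUIC}), and monotonicity (\ref{IC}) constraints, so I would use these as the active conditions that pin down the minimal rewards.

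First I would argue that at the optimum the type-$1$ IR constraint is binding: if $R_1 - b_1^2/\theta_1 > 0$, decreasing $R_1$ (and, if needed, propagating the decrease up the chain of LDIC constraints) preserves feasibility while strictly increasing $U_s$, a contradiction; this forces $R_1^{\star} = b_1^2/\theta_1$. Next, for each $n \ge 2$ I would show that the corresponding LDIC constraint (\ref{LDIC}) is binding by the same perturbation argument: any slack in $R_n - b_n^2/\theta_n \ge R_{n-1} - b_{n-1}^2/\theta_n$ would allow lowering $R_n$. Solving the resulting system of equalities recursively gives $R_1^{\star} = b_1^2/\theta_1$ and $R_n^{\star} = R_{n-1}^{\star} + b_n^2/\theta_n - b_{n-1}^2/\theta_n$, which is exactly (\ref{R_n}). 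Equivalently, and more cleanly, one can prove directly by induction on $n$ that \emph{every} feasible $\bm{R}$ satisfies $R_n \ge R_n^{\star}$, using $R_1 \ge b_1^2/\theta_1$ for the base case and the LDIC inequality for the step.

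It then remains to verify that $(\bm{b},\bm{R}^{\star})$ is itself feasible, so the lower bound is attained. Using the binding LDIC relations, the type-$n$ utility telescopes to $U_n^C(b_n,R_n^{\star}) = U_{n-1}^C(b_{n-1},R_{n-1}^{\star}) + b_{n-1}^2(\tfrac{1}{\theta_{n-1}} - \tfrac{1}{\theta_n})$, which is nondecreasing in $n$ because $\theta_{n-1} \le \theta_n$; since $U_1^C(b_1,R_1^{\star}) = 0$, all IR constraints hold. Monotonicity of $\bm{R}^{\star}$ follows from $R_n^{\star} - R_{n-1}^{\star} = (b_n^2 - b_{n-1}^2)/\theta_n \ge 0$, and the LUIC constraints reduce, after substituting (\ref{R_n}), to $b_n^2(\tfrac{1}{\theta_{n+1}} - \tfrac{1}{\theta_n}) \ge b_{n+1}^2(\tfrac{1}{\theta_{n+1}} - \tfrac{1}{\theta_n})$, which holds since the common factor is nonpositive and $b_n \le b_{n+1}$. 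Invoking the equivalence from \textbf{Lemma 1} (LDIC plus LUIC plus monotonicity implies global IC) then shows all IC constraints hold, so $(\bm{b},\bm{R}^{\star})$ is feasible and hence optimal.

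The main obstacle I anticipate is the second step: rigorously justifying that every LDIC constraint must bind at the optimum, and checking that the downward reward reductions used in the perturbation argument do not violate LUIC or monotonicity along the way. I expect the cleanest route around this is to bypass the perturbation argument entirely and instead establish the componentwise bound $R_n \ge R_n^{\star}$ for all feasible $\bm{R}$ by the induction sketched above, after which only the routine feasibility verification of $(\bm{b},\bm{R}^{\star})$ remains.
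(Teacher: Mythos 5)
Your proposal is correct and is essentially the standard binding-constraint argument (type-$1$ IR binds, each LDIC binds, then verify IR, monotonicity, and LUIC for the resulting $\bm{R}^{\star}$) that the paper itself does not reproduce but delegates to its cited reference \cite{hou2017incentive}. The only point worth flagging is that Lemma~1 as stated gives only \emph{necessary} conditions, so the final step of your argument implicitly uses the standard (and true, but unproven here) converse that LDIC, LUIC, and monotonicity together imply the full set of IC constraints.
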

\begin{proof}
    Please refer to \cite{hou2017incentive}.
\end{proof}
Based on the iterative method, the optimal reward in (\ref{R_n}) can be rewritten as
\begin{equation}\label{R_n2}
    \begin{aligned}
        R_n^{\star} = \frac{b_1^2}{\theta_1} + \sum\limits_{i=1}^n\Delta_i,\:n\in \mathcal{N}, 
    \end{aligned}
\end{equation}
where $\Delta_1 = 0$ and $\Delta_i = \frac{b_i^2-b_{i-1}^2}{\theta_i}, \forall i \in \left\{2,\ldots,N\right\}$. By substituting the optimal reward (\ref{R_n2}) into the MSP's utility (\ref{U_s}), we can get the MSP's utility with respect to $\bm{b}$. Therefore, \textbf{Problem 1} is reformulated as 
\begin{equation}\label{problem2}
    \begin{split}
        \textbf{Problem 2:}\: &\max\limits_{\bm{b}}\:U_s(\bm{b})\\
        &\:\:\text{s.t.}\:\:b_1 \leq \cdots \leq b_N,
    \end{split}
\end{equation}
where $U_s(\bm{b}) = \sum_{n\in \mathcal{N}}U_{s,n}=\sum_{n\in \mathcal{N}}M(Q_nS_n-e_nb_n^2)$, and $e_n$ is given by
\begin{equation}
    e_n = \left\{
    \begin{aligned}
        &\frac{Q_n}{\theta_n}+\bigg(\frac{1}{\theta_n}-\frac{1}{\theta_{n+1}}\bigg)\sum\limits_{j=n+1}^NQ_j,\: 1 \leq n < N,\\
        &\frac{Q_N}{\theta_N},\: n = N.
    \end{aligned}
    \right.
\end{equation}

\begin{algorithm}[t]
		\caption{Optimal Contract Design}

		\KwIn{Basic channel parameters $\small\{\rho_s, h_m^0, d_{s,m},\alpha, N_0\small\}$ and MRPs' types $\left\{\theta_n,1 \leq n \leq N\right\}$.}
  
		\KwOut{The optimal bandwidth $\bm{b}^{\star}$ and the optimal reward $\bm{R}^{\star}$.}

	    \For{$n = 1,\ldots,N$}
            {
                Initialize the iteration index $z = 0$, the step size $\varphi$, the empty vector $\bm{v}_{s,n}$, and the feasible range of bandwidth $[b_{min}, b_{max}]$, where $b_{min} = b_n^z = 10^5$.

                \While {$b_n^z < b_{max}$}
                {
                    Calculate $U_{s,n}(b_n^z)$.
                    
                    Set $\bm{v}_{s,n}(z) = U_{s,n}(b_n^z)$.

                    $b_n^z = b_n^z + \varphi$.

                    $z = z + 1$.
                }

                Obtain the optimal bandwidth $b_n^{\star}$ for the type-$n$ MRP by using the maximum value index in $\bm{v}_{s,n}$.

            }
            Obtain the optimal bandwidth vector $\bm{b}^{\star '} = \small\{b_1^{\star},\ldots,b_n^{\star},\ldots,b_N^{\star }\small\}$.

            \If {$\bm{b}^{\star '}$ does not satisfy the monotonicity condition}
            {
                Apply \textit{Bunching and Ironing} algorithm \cite{gao2011spectrum} to adjust $\bm{b}^{\star '}$ and output $\bm{b}^{\star}$.

            }
            \Else{
            $\bm{b}^{\star} = \bm{b}^{\star '}$.
            }

            \For{$n = 1,\ldots,N$} 
            {
                Calculate the optimal reward $R_n^{\star}$ based on (\ref{R_n}).
            }

            Obtain the optimal reward vector $\bm{R}^{\star } = \small\{R_1^{\star},\ldots,R_n^{\star},\ldots,R_N^{\star }\small\}$.

		\textbf{return} $\small\{\bm{b}^{\star}, \bm{R}^{\star}\small\}$.
	\end{algorithm}

Since $U_s$ is not a concave function, which cannot be solved by the standard convex optimization tools, we propose a greedy algorithm to design the optimal contract referring to \cite{kang2022blockchain}. Motivated by the above analysis, the detailed contract design
is shown in \textbf{Algorithm 1}. Firstly, we can obtain the optimal bandwidth $b_n^{\star}$ by using the iterative method. If $\bm{b}^{\star '}$ cannot satisfy the monotonicity constraint, the iterative algorithm, i.e., \textit{Bunching and Ironing} algorithm\cite{gao2011spectrum} is adopted to obtain the optimal solution $\bm{b}^{\star}$, which ensures that the monotonicity constraint is satisfied. Finally, the optimal reward $R_n^{\star}$ can be calculated by (\ref{R_n}). Note that the computational complexity of \textbf{Algorithm 1} is $\mathcal{O}(N\log\big(\frac{b_{max}-b_{min}}{\varphi}\big))$, which indicates that \textbf{Algorithm 1} is actually efficient.

\begin{table}[t]\label{parameter}
	\renewcommand{\arraystretch}{1.1} 
	\caption{ Key Parameters in the Simulation.}\label{table} \centering 
	\begin{tabular}{m{4.7cm}<{\raggedright}|m{2.7cm}<{\centering}} 
		\hline		
		\textbf{Parameters} & \textbf{Values}\\	
		\hline
		Transmit power of the MSP $(\rho_s)$ & $23$\:$\rm{dBm}$  \\	
		\hline
		Noise power density $(N_0)$ & $-174$\:$\rm{dBm/Hz}$  \\
		\hline
		Path-loss coefficient $(\alpha)$  &  $2$  \\	
		\hline		
		The size of avatar data transmitted to the MRP $m$ $(D_m)$ &  $[100\:\rm{MB},200\:\rm{MB}]$ \\
		\hline		
		Unit profit for the satisfaction $(\beta)$ &  $200$\\
		\hline
		Unit bandwidth cost $(a_m)$ & $[0.0001,0.001]$\\
		\hline
		Maximum tolerant AoMT $(K)$ & $50$\:$\rm{s}$\\
		\hline
		Distance between the MSP and the MRP $m$ $(d_{s,m})$ & $500\:\rm{m}$\\
		\hline
		The sum of the time of collecting data and processing data $(T)$ & $5$\:$\rm{s}$\\
		\hline
	\end{tabular}\label{parameter}
\end{table}

\section{Numerical Results}
In this section, we consider $M = 10$ MRPs and the type-$n$ follows the uniform distribution\cite{kang2022blockchain}. Referring to \cite{wang2019interference,maccartney2014omnidirectional, kang2022blockchain,hou2017incentive,8031051}, the main parameters are listed in Table \ref{parameter}. Firstly, we validate the IC and IR constraints. Then, we compare the proposed incentive mechanism with other incentive mechanisms: 
\begin{enumerate}[1)]
    \item \emph{\textbf{Contract theory with complete information}} that the channel condition and the bandwidth cost of MRPs are known by the MSP \cite{hou2017incentive}.
    \item \emph{\textbf{Contract theory with social maximization}} \cite{xiong2020multi} that the MSP aims to maximize social welfare under information asymmetry \cite{ye2022incentivizing}.
\end{enumerate}

\begin{figure}
	\centering
	\includegraphics[width=0.48\textwidth]{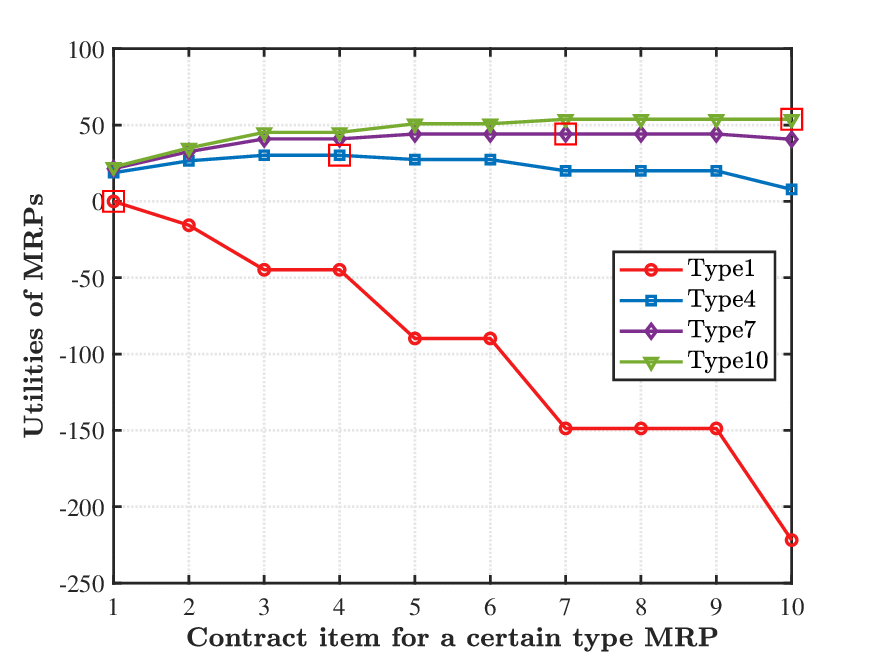}
	\caption{Utilities of MRPs under different types.}\label{IC_IR}  
\end{figure} 

\begin{figure*}[t]
    \begin{center}
	\begin{minipage}[t]{0.48\linewidth}
		\centering
		\captionsetup{font={normal}}
		\includegraphics[width=1\linewidth]{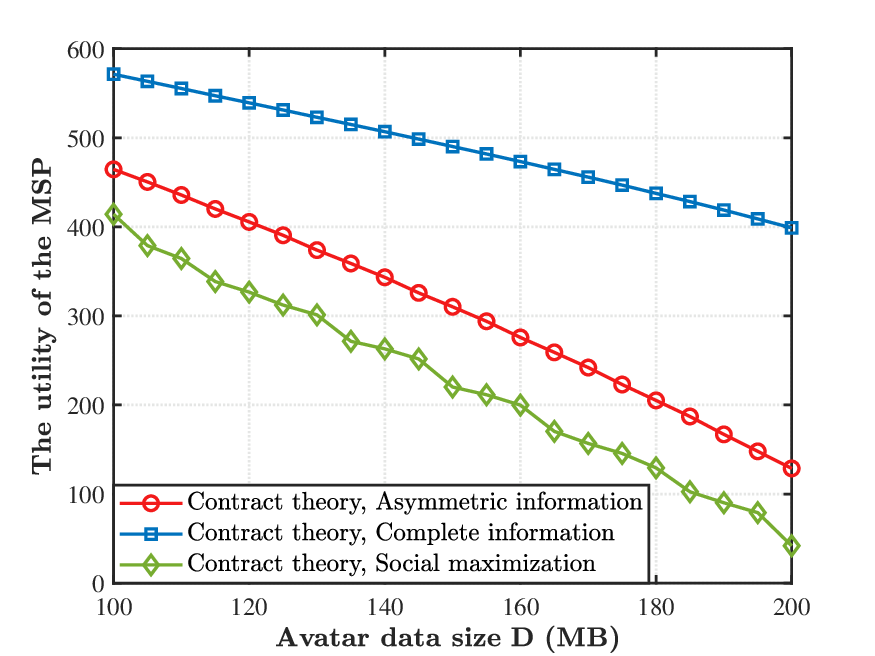}
		\caption{The utility of the MSP corresponding to different avatar data sizes $D$ under three incentive mechanisms.}\label{u_s}
	\end{minipage}
	\hspace{0.2in}
	\begin{minipage}[t]{0.48\linewidth}
		\centering
		\captionsetup{font={normal}}
		\includegraphics[width=1\linewidth]{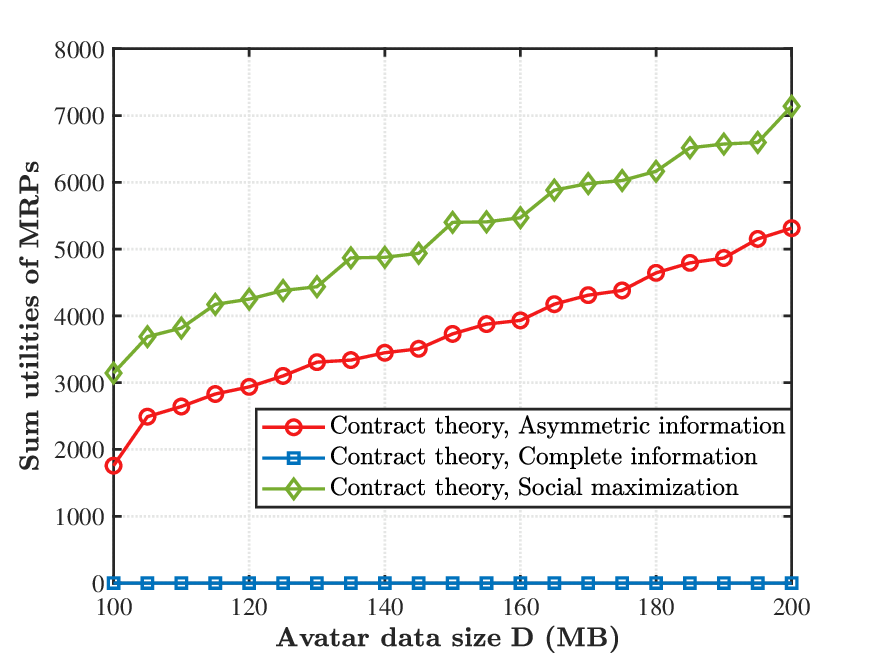}
		\caption{Sum utilities of MRPs corresponding to different avatar data sizes $D$ under three incentive mechanisms.}\label{u_w}
	\end{minipage}
    \end{center}
\end{figure*}

Figure \ref{IC_IR} shows the feasibility (i.e., IR and IC constraints) of the proposed scheme under information asymmetry. The utilities of four types of MRPs are shown when they sign different contract items. We can find that the utilities of MRPs are increasing with the increasing types of MRPs, and the utility of the MRP choosing the corresponding contract item is no less than $0$, which demonstrates that our designed contract guarantees the IR conditions. Besides, each MRP selects the contract item corresponding to its own type that achieves the maximum utility. For example, a type-$1$ MRP obtains the maximum utility only when it chooses the contract item $(b_1,R_1)$, which is exactly designed for its type. If the type-$1$ MRP selects any other contract items $(b_n,R_n),n\in\small\{2,\ldots,N\small\}$, its utility will reduce. Note that a similar phenomenon can be observed for all other types of MRPs when they choose the contract item designed for their corresponding types. Therefore, the above observations validate that our designed contract satisfies the IR and IC conditions. Based on the above analysis, we conclude that MRPs will automatically reveal their types to the MSP after choosing the contract item, which means that by utilizing the proposed scheme, the MSP can capture the MRPs’ private information and thus effectively alleviate the impact of information asymmetry.

Figure \ref{u_s} shows the utility of the MSP corresponding to different avatar data sizes $D$ under three incentive mechanisms. From Fig. \ref{u_s}, we can observe that regardless of the incentive mechanism, the utility of the MSP decreases as the avatar data size $D$ increases. The reason is that to meet the delay requirement of the avatar migration, the bigger avatar data size $D$ indicates that the MSP requires more bandwidth resources from MRPs and pays more rewards to them, thus decreasing the utility of the MSP. Besides, the utility of the MSP under the contract theory with complete information is always greater than that under the contract theory with asymmetric information, which indicates that the MSP obtains fewer benefits because of information asymmetry. The reason is that although the proposed scheme can effectively mitigate the effects of information asymmetry by leveraging contract theory\cite{hou2017incentive}, a rational MSP still has a chance to provide false information maliciously and cheat for more rewards, which decreases the utility of the MSP.

Figure \ref{u_w} shows the sum utilities of MRPs corresponding to different avatar data sizes $D$ under three incentive mechanisms. From Fig. \ref{u_w}, we can observe that as the avatar data size $D$ increases, the sum utilities of MRPs under the contract theory with complete information are always $0$, which indicates that the MRP receives rewards equal to its bandwidth cost with complete information. We can also find that the sum utilities of MRPs increase as the avatar data size $D$ increases under the contract theory with asymmetric information or the contract theory with social maximization. The reason is that since the amount of avatar data migrated increases, the MRPs can obtain more rewards based on the designed contract when they contribute more bandwidth resources for avatar migrations. Therefore, the sum utilities of MRPs increase as the avatar data size $D$ increases. Besides, the MRPs obtain the optimal utilities under the contract theory with social maximization, and the sum utilities of the MRPs under the contract theory with asymmetric information are greater than those under the contract theory with complete information.

\section{Conclusion}
In this paper, we have studied VT migrations in vehicular metaverses and formulated the incentive mechanism under asymmetric information for avatar migrations (as an example of VT migrations). We have proposed a novel metric named AoMT based on the concept of AoI for vehicular metaverses to quantify the task freshness of the avatar migration, which can evaluate the MSP's satisfaction. Furthermore, to improve the efficiency of avatar migrations, we have designed an AoMT-based contract model under information asymmetry for incentivizing MRPs to contribute bandwidth resources. Finally, numerical results have demonstrated the efficiency of the proposed incentive mechanism for avatar migrations in vehicular metaverses. In the future, we will improve the mathematical model to adapt to the VT migration. Besides, we may design a prototype system to evaluate our scheme and use artificial intelligence tools like deep reinforcement learning to enhance the solution methodology.

\bibliographystyle{IEEEtran}
\bibliography{ref}

\begin{thebibliography}{10}
\providecommand{\url}[1]{#1}
\csname url@samestyle\endcsname
\providecommand{\newblock}{\relax}
\providecommand{\bibinfo}[2]{#2}
\providecommand{\BIBentrySTDinterwordspacing}{\spaceskip=0pt\relax}
\providecommand{\BIBentryALTinterwordstretchfactor}{4}
\providecommand{\BIBentryALTinterwordspacing}{\spaceskip=\fontdimen2\font plus
\BIBentryALTinterwordstretchfactor\fontdimen3\font minus
  \fontdimen4\font\relax}
\providecommand{\BIBforeignlanguage}[2]{{%
\expandafter\ifx\csname l@#1\endcsname\relax
\typeout{** WARNING: IEEEtran.bst: No hyphenation pattern has been}%
\typeout{** loaded for the language `#1'. Using the pattern for}%
\typeout{** the default language instead.}%
\else
\language=\csname l@#1\endcsname
\fi
#2}}
\providecommand{\BIBdecl}{\relax}
\BIBdecl

\bibitem{zhou2022vetaverse}
P.~Zhou, J.~Zhu, Y.~Wang, Y.~Lu, Z.~Wei, H.~Shi, Y.~Ding, Y.~Gao, Q.~Huang,
  Y.~Shi \emph{et~al.}, ``Vetaverse: Technologies, applications, and visions
  toward the intersection of metaverse, vehicles, and transportation systems,''
  \emph{arXiv preprint arXiv:2210.15109}, 2022.

\bibitem{xu2022epvisa}
M.~Xu, D.~Niyato, B.~Wright, H.~Zhang, J.~Kang, Z.~Xiong, S.~Mao, and Z.~Han,
  ``Epvisa: Efficient auction design for real-time physical-virtual
  synchronization in the metaverse,'' \emph{arXiv preprint arXiv:2211.06838},
  2022.

\bibitem{jiang2022reliable}
Y.~Jiang, J.~Kang, D.~Niyato, X.~Ge, Z.~Xiong, C.~Miao, and X.~Shen, ``Reliable
  distributed computing for metaverse: A hierarchical game-theoretic
  approach,'' \emph{IEEE Transactions on Vehicular Technology}, 2022.

\bibitem{yu2022bi}
J.~Yu, A.~Alhilal, P.~Hui, and D.~H. Tsang, ``Bi-directional digital twin and
  edge computing in the metaverse,'' \emph{arXiv preprint arXiv:2211.08700},
  2022.

\bibitem{10090432}
Y.~Wang, Z.~Su, S.~Guo, M.~Dai, T.~H. Luan, and Y.~Liu, ``A survey on digital
  twins: Architecture, enabling technologies, security and privacy, and future
  prospects,'' \emph{IEEE Internet of Things Journal}, pp. 1--1, 2023.

\bibitem{zhang2022toward}
H.~Zhang, B.~Di, K.~Bian, Z.~Han, H.~V. Poor, and L.~Song, ``Toward ubiquitous
  sensing and localization with reconfigurable intelligent surfaces,''
  \emph{Proceedings of the IEEE}, 2022.

\bibitem{bolton2004contract}
P.~Bolton and M.~Dewatripont, \emph{Contract theory}.\hskip 1em plus 0.5em
  minus 0.4em\relax MIT press, 2004.

\bibitem{hou2017incentive}
Z.~Hou, H.~Chen, Y.~Li, and B.~Vucetic, ``Incentive mechanism design for
  wireless energy harvesting-based internet of things,'' \emph{IEEE Internet of
  Things Journal}, vol.~5, no.~4, pp. 2620--2632, 2017.

\bibitem{xu2022wireless}
M.~Xu, D.~Niyato, J.~Kang, Z.~Xiong, C.~Miao, and D.~I. Kim, ``Wireless
  edge-empowered metaverse: A learning-based incentive mechanism for virtual
  reality,'' in \emph{ICC 2022-IEEE International Conference on
  Communications}.\hskip 1em plus 0.5em minus 0.4em\relax IEEE, 2022, pp.
  5220--5225.

\bibitem{9827604}
Y.~Ren, R.~Xie, F.~R. Yu, T.~Huang, and Y.~Liu, ``Quantum collective learning
  and many-to-many matching game in the metaverse for connected and autonomous
  vehicles,'' \emph{IEEE Transactions on Vehicular Technology}, vol.~71,
  no.~11, pp. 12\,128--12\,139, 2022.

\bibitem{xu2023generative}
M.~Xu, D.~Niyato, H.~Zhang, J.~Kang, Z.~Xiong, S.~Mao, and Z.~Han, ``Generative
  ai-empowered effective physical-virtual synchronization in the vehicular
  metaverse,'' \emph{arXiv preprint arXiv:2301.07636}, 2023.

\bibitem{song2019age}
X.~Song, X.~Qin, Y.~Tao, B.~Liu, and P.~Zhang, ``Age based task scheduling and
  computation offloading in mobile-edge computing systems,'' in \emph{2019 IEEE
  Wireless Communications and Networking Conference Workshop (WCNCW)}.\hskip
  1em plus 0.5em minus 0.4em\relax IEEE, 2019, pp. 1--6.

\bibitem{huang2022joint}
X.~Huang, W.~Zhong, J.~Nie, Q.~Hu, Z.~Xiong, J.~Kang, and T.~Q. Quek, ``Joint
  user association and resource pricing for metaverse: Distributed and
  centralized approaches,'' in \emph{2022 IEEE 19th International Conference on
  Mobile Ad Hoc and Smart Systems (MASS)}.\hskip 1em plus 0.5em minus
  0.4em\relax IEEE, 2022, pp. 505--513.

\bibitem{kosta2017age}
A.~Kosta, N.~Pappas, V.~Angelakis \emph{et~al.}, ``Age of information: A new
  concept, metric, and tool,'' \emph{Foundations and Trends{\textregistered} in
  Networking}, vol.~12, no.~3, pp. 162--259, 2017.

\bibitem{ChenYing}
C.~Ying, Z.~Zhao, C.~Yi, Y.~Shi, and R.~Wang, ``Aoti minimization for
  multi-type data sampling in industrial wireless sensor networks,'' in
  \emph{2022 IEEE 20th International Conference on Embedded and Ubiquitous
  Computing (EUC)}, 2022, pp. 36--41.

\bibitem{li2021age}
R.~Li, Q.~Ma, J.~Gong, Z.~Zhou, and X.~Chen, ``Age of processing: Age-driven
  status sampling and processing offloading for edge-computing-enabled
  real-time iot applications,'' \emph{IEEE Internet of Things Journal}, vol.~8,
  no.~19, pp. 14\,471--14\,484, 2021.

\bibitem{zhang2019joint}
T.~Zhang, Y.~Xu, J.~Loo, D.~Yang, and L.~Xiao, ``Joint computation and
  communication design for uav-assisted mobile edge computing in iot,''
  \emph{IEEE Transactions on Industrial Informatics}, vol.~16, no.~8, pp.
  5505--5516, 2019.

\bibitem{9930881}
Z.~Qin, Z.~Wei, Y.~Qu, F.~Zhou, H.~Wang, D.~W.~K. Ng, and C.-B. Chae,
  ``Aoi-aware scheduling for air-ground collaborative mobile edge computing,''
  \emph{IEEE Transactions on Wireless Communications}, pp. 1--1, 2022.

\bibitem{9491087}
Y.~Lu, S.~Maharjan, and Y.~Zhang, ``Adaptive edge association for wireless
  digital twin networks in 6g,'' \emph{IEEE Internet of Things Journal},
  vol.~8, no.~22, pp. 16\,219--16\,230, 2021.

\bibitem{mohsenian2010autonomous}
A.-H. Mohsenian-Rad, V.~W. Wong, J.~Jatskevich, R.~Schober, and A.~Leon-Garcia,
  ``Autonomous demand-side management based on game-theoretic energy
  consumption scheduling for the future smart grid,'' \emph{IEEE transactions
  on Smart Grid}, vol.~1, no.~3, pp. 320--331, 2010.

\bibitem{kang2022blockchain}
J.~Kang, D.~Ye, J.~Nie, J.~Xiao, X.~Deng, S.~Wang, Z.~Xiong, R.~Yu, and
  D.~Niyato, ``Blockchain-based federated learning for industrial metaverses:
  Incentive scheme with optimal aoi,'' in \emph{2022 IEEE International
  Conference on Blockchain (Blockchain)}.\hskip 1em plus 0.5em minus
  0.4em\relax IEEE, 2022, pp. 71--78.

\bibitem{gao2011spectrum}
L.~Gao, X.~Wang, Y.~Xu, and Q.~Zhang, ``Spectrum trading in cognitive radio
  networks: A contract-theoretic modeling approach,'' \emph{IEEE Journal on
  Selected Areas in Communications}, vol.~29, no.~4, pp. 843--855, 2011.

\bibitem{wang2019interference}
B.~Wang, R.~Zhang, C.~Chen, X.~Cheng, L.~Yang, and Y.~Jin, ``Interference
  hypergraph-based 3d matching resource allocation protocol for noma-v2x
  networks,'' \emph{IEEE Access}, vol.~7, pp. 90\,789--90\,800, 2019.

\bibitem{maccartney2014omnidirectional}
G.~R. MacCartney, M.~K. Samimi, and T.~S. Rappaport, ``Omnidirectional path
  loss models in new york city at 28 ghz and 73 ghz,'' in \emph{2014 IEEE 25th
  Annual International Symposium on Personal, Indoor, and Mobile Radio
  Communication (PIMRC)}.\hskip 1em plus 0.5em minus 0.4em\relax IEEE, 2014,
  pp. 227--231.

\bibitem{8031051}
A.~Bazzi, B.~M. Masini, A.~Zanella, and I.~Thibault, ``On the performance of
  ieee 802.11p and lte-v2v for the cooperative awareness of connected
  vehicles,'' \emph{IEEE Transactions on Vehicular Technology}, vol.~66,
  no.~11, pp. 10\,419--10\,432, 2017.

\bibitem{xiong2020multi}
Z.~Xiong, J.~Kang, D.~Niyato, P.~Wang, H.~V. Poor, and S.~Xie, ``A
  multi-dimensional contract approach for data rewarding in mobile networks,''
  \emph{IEEE Transactions on Wireless Communications}, vol.~19, no.~9, pp.
  5779--5793, 2020.

\bibitem{ye2022incentivizing}
D.~Ye, X.~Huang, Y.~Wu, and R.~Yu, ``Incentivizing semi-supervised vehicular
  federated learning: A multi-dimensional contract approach with bounded
  rationality,'' \emph{IEEE Internet of Things Journal}, 2022.

\end{thebibliography}
\end{document}